\newcommand{\tikzmark}[2][-3pt]{\tikz[remember picture, overlay, baseline=-0.5ex]\node[#1](#2){};}
\tikzset{brace/.style={decorate, decoration={brace}}, brace mirrored/.style={decorate, decoration={brace,mirror}}}
\newcounter{arrow}
\newcounter{brace}
\newcommand{\drawbrace}[3][brace]{
  \refstepcounter{brace}
  \tikz[remember picture, overlay]\draw[#1] (#2.center)--(#3.center)node[pos=0.5, name=brace-\thebrace]{};
}
\newcommand{\annote}[3][]{
  \tikz[remember picture, overlay]\node[#1] at (#2) {#3};
}
\newcommand{\biFMTwoA}{
  \begingroup
  \renewcommand{\arraystretch}{1}
  \begin{tabular}{ | l |}
    \multicolumn{1}{c}{$\mathcal I$} \\
    \multicolumn{1}{c}{} \\[-2ex]
    \hline
    $\;\;\;\;\;\;\;\;\;\;\;\;\;\;\;$ \\
    \\[1ex]
    \\ \hline
    \cellcolor[gray]{0.5}\tikzmark[xshift=-10pt,yshift=1.5ex]{mark1}\tikzmark[xshift=-10pt,yshift=1.5ex]{mark2}\tikzmark[xshift=+72pt,yshift=1.4ex]{ii1} \\[2ex] \hline
    \tikzmark[xshift=-10pt,yshift=1.5ex]{mark3}\tikzmark[xshift=-10pt,yshift=1.5ex]{mark4}\cellcolor[gray]{0.8}\tikzmark[xshift=+50pt,yshift=1.4ex]{ii2} \\ \hline
    \tikzmark[xshift=-10pt,yshift=1.5ex]{mark5}\tikzmark[xshift=-10pt,yshift=1.5ex]{mark6}\tikzmark[xshift=+50pt,yshift=1.5ex]{ii3} \\ \hline
    \tikzmark[xshift=-10pt,yshift=1.5ex]{mark7}\tikzmark[xshift=-10pt,yshift=1.5ex]{mark8}\tikzmark[xshift=+72pt,yshift=1.4ex]{ii4} \\
    \\
    \\
    \hline
  \end{tabular}
  \endgroup
  \drawbrace[brace, thick]{ii2}{ii3}
  \drawbrace[brace, thick]{ii1}{ii4}
  \annote[right]{brace-1}{$Pc_j\,$}
  \annote[right]{brace-2}{$P\,$}

  \begin{tikzpicture}[overlay,remember picture]
    \draw[overlay,->,thin,shorten <=2pt] (mark1.west) -- (mark2.east) node[xshift=-15pt, yshift=1pt]{$a$};
    \draw[overlay,->,thin,shorten <=2pt] (mark3.west) -- (mark4.east) node[xshift=-15pt, yshift=1pt]{$a'$};
    \draw[overlay,->,thin,shorten <=2pt] (mark5.west) -- (mark6.east) node[xshift=-15pt, yshift=1pt]{$b'$};
    \draw[overlay,->,thin,shorten <=2pt] (mark7.west) -- (mark8.east) node[xshift=-15pt, yshift=1pt]{$b$};
  \end{tikzpicture}
}
\newcommand{\biFMTwoB}{
  \begin{tabular}{ | l |}
    \multicolumn{1}{c}{$\mathcal I^{rev}$} \\
    \multicolumn{1}{c}{} \\[-2ex]
    \hline
    $\;\;\;\;\;\;\;\;\;\;\;\;\;\;\;$ \\ \hline
    \cellcolor[gray]{0.5}\tikzmark[xshift=+50pt,yshift=1.4ex]{iii5} \\ \hline
    \tikzmark[xshift=+50pt,yshift=1.4ex]{iii6} \\[-1ex] \hline
    \cellcolor[gray]{0.5}\tikzmark[xshift=+50pt,yshift=1.4ex]{iii7} \\ \hline
    \tikzmark[xshift=+50pt,yshift=1.4ex]{iii8} \\[-1ex] \hline
    \cellcolor[gray]{0.8}\tikzmark[xshift=-10pt,yshift=1.5ex]{marc1}\tikzmark[xshift=-10pt,yshift=1.5ex]{marc2}\tikzmark[xshift=+50pt,yshift=1.4ex]{iii1} \\ \hline
    \tikzmark[xshift=-10pt,yshift=1.5ex]{marc3}\tikzmark[xshift=-10pt,yshift=1.5ex]{marc4}\tikzmark[xshift=+50pt,yshift=1.4ex]{iii2} \\ \hline
    \tikzmark[xshift=-10pt,yshift=1.5ex]{marc5}\tikzmark[xshift=-10pt,yshift=1.5ex]{marc6}\tikzmark[xshift=+50pt,yshift=1.5ex]{iii3} \\[6ex] \hline
    \tikzmark[xshift=-10pt,yshift=1.5ex]{marc7}\tikzmark[xshift=-10pt,yshift=1.5ex]{marc8}\tikzmark[xshift=+50pt,yshift=1.4ex]{iii4} \\
    \hline
  \end{tabular}

  \drawbrace[brace, thick]{iii3}{iii4}
  \annote[right]{brace-3}{$P^{rev}\,$}
  \drawbrace[brace, thick]{iii1}{iii2}
  \annote[right]{brace-4}{$c_{j\phantom{-0}}P^{rev}\,$}
  \drawbrace[brace, thick]{iii5}{iii6}
  \annote[right]{brace-5}{$c_{j-2}P^{rev}\,$}
  \drawbrace[brace, thick]{iii7}{iii8}
  \annote[right]{brace-6}{$c_{j-1}P^{rev}\,$}
  \begin{tikzpicture}[overlay,remember picture]
    \draw[overlay,->,thin,shorten <=2pt] (marc1.west) -- (marc2.east) node[xshift=-20pt, yshift=3.5pt]{$a'_{rev}$};
    \draw[overlay,->,thin,shorten <=2pt] (marc3.west) -- (marc4.east) node[xshift=-20pt, yshift=0.5pt]{$b'_{rev}$};
    \draw[overlay,->,thin,shorten <=2pt] (marc5.west) -- (marc6.east) node[xshift=-20pt, yshift=0pt]{$a_{rev}$};
    \draw[overlay,->,thin,shorten <=2pt] (marc7.west) -- (marc8.east) node[xshift=-20pt, yshift=0pt]{$b_{rev}$};
  \end{tikzpicture}
}
\newcommand{\rankdictGraphics}{
  \begin{tikzpicture}[xscale=0.047em,yscale=-0.05em]
    \tikzstyle{block lines}=[black,line width=.1em]
    \node[anchor=base east] at (-0.4,0.4) {$B$};
    \node[anchor=base east] at (-0.4,1.2) {blocks $M$};
    \node[anchor=base east] at (-0.4,2) {superblocks $M'$};
    \node[anchor=base west] at (20,0.4) {$\ldots$};
    \node[anchor=base west] at (20,1.2) {$\ldots$};
    \node[anchor=base west] at (20,2) {$\ldots$};
    \draw[black, densely dotted, step=0.5cm] (0,0) grid +(20,0.5);
    \begin{pgfscope}
        \clip (-.1,0.0) rectangle (20.1, 4);
        \draw[block lines, step=2cm](0,0.9) grid +(20,0.3);
        \draw[block lines] (0,1.2) rectangle +(20,0);
        \draw[block lines, step=8cm](0,1.7) grid +(20,0.3);
    \end{pgfscope}
    \draw[decorate,decoration={brace,mirror,amplitude=.5em}] (0,2.5) -- +(8,0) node at (4,2.8) [below]{$\ell^2$};
    \draw[decorate,decoration={brace,mirror,amplitude=.5em}] (8,2.5) -- +(2,0) node at (9,2.8) [below]{$\ell$};
  \end{tikzpicture}
}
\providecommand{\Oh}{\mathcal{O}}
\providecommand{\oh}{o}
\providecommand{\Index}{\mathcal{I}}
\providecommand{\EPR}{EPR}
\newcounter{nodecount}
\newcounter{nodecountB}
\newcommand\tabnodeE[1]{\tikz \node[draw=none,text height=1.5ex,text depth=.25ex] {#1};}
\newcommand\tabnodeM[1]{\addtocounter{nodecount}{1} \tikz \node[draw=black,text height=1.5ex,text depth=.25ex] (\arabic{nodecount}) {#1};}
\newcommand\tabnodeB[1]{\addtocounter{nodecountB}{1} \tikz \node[draw=none,text height=1.5ex,text depth=.25ex] (B\arabic{nodecountB}) {#1};}
\begin{document}
\title{EPR-dictionaries: A practical and fast data structure for constant time searches in unidirectional and bidirectional FM-indices} 
\author{
  Christopher Pockrandt\thanks{FU Berlin
    (\protect\email{christopher.pockrandt@fu-berlin.de}, \protect\url{http://reinert-lab.de}).}
  \and
  Marcel Ehrhardt\thanks{FU Berlin (\protect\email{marcel.ehrhardt@fu-berlin.de}).}
  \and
  Knut Reinert\thanks{FU Berlin (\protect\email{knut.reinert@fu-berlin.de})}
}
\institute{Department of Computer Science and Mathematics, Freie Universit\"at Berlin, Germany}

\pagestyle{plain}

\maketitle


\begin{abstract}
The unidirectional FM index was introduced by Ferragina and Manzini in 2000 and allows to search a pattern in the index in one direction. The bidirectional FM index (2FM) was introduced by Lam et al. in 2009. It allows to search for a pattern by extending an infix of the pattern arbitrarily to the left or right. The method of Lam et al. can conduct one step in time $\Oh(\sigma)$ while needing space $\Oh(\sigma \cdot n)$ using constant time rank queries on bit vectors. Schnattinger and colleagues improved this time to $\Oh(\log \sigma)$ while using $\Oh(\log \sigma \cdot n)$ bits of space for both, the FM and 2FM index. This is achieved by the use of binary wavelet trees.

In this paper we introduce a new, practical method for conducting an exact search in a uni- and bidirectional FM index in $\Oh(1)$ time per step while using $\Oh(\log \sigma \cdot n) + \oh(\log \sigma \cdot \sigma \cdot n)$ bits of space. This is done by replacing the binary wavelet tree by a new data structure, the \emph{Enhanced Prefixsum Rank dictionary} (\EPR-dictionary).

We implemented this method in the SeqAn C++ library and experimentally validated our theoretical results. In addition we compared our implementation with other freely available implementations of bidirectional indices and show that we are between $\approx 2.6-4.8$ times faster. This will have a large impact for many bioinformatics applications that rely on practical implementations of (2)FM indices e.g. for read mapping. To our knowledge this is the first implementation of a constant time method for a search step in 2FM indices.
\end{abstract}

\begin{keywords}
  FM index, bidirectional, BWT, bit vector, rank queries, read mapping.
\end{keywords}

\section{Introduction}

It is seldom that new data structures or algorithms have such a large practical impact as full text indices had for biological sequence analysis. The so-called next-generation sequencing (NGS) allows to produce billions of small DNA strings called \emph{reads}, usually of length 100-250. It is an invaluable technology for a multitude of applications in biomedicine. In many of these applications finding the positions of the DNA strings in a reference genome (i.e., a large string of length $\approx 10^7-10^{10}$) is the first fundamental step preceding downstream analyses. Finding the positions of the reads is commonly referred to as \emph{read mapping}.

Because of sequencing errors and genomic variations not all strings occur exactly in a reference genome. Therefore approximate occurrences must be considered and algorithms for approximate string matching tolerating mismatches, insertions, and deletions must be applied to solve the problem. 

This has triggered a plethora of work in the field to implement fast and accurate read mappers. Many of the popular programs like Bowtie2  \cite{Bowtie2}, BWA \cite{BWA}, BWA-Mem \cite{Li_Aligning_2013}, Masai \cite{Siragusa_Fast_2013}, Yara \cite{Yara2015}, and GEM \cite{Marco-Sola_The} use as their main data structure a version of the FM index \cite{Ferragina_Opportunistic_2000} that was introduced by Ferragina and Manzini in 2000. The FM index is based on the Burrows-Wheeler transform (BWT) \cite{Burrows94ablock-sorting} of the given text, i.e., the genomes at hand, and conceptually some lookup tables containing counts of characters in prefixes of the text. In its original form it allows to search exactly for a pattern in \emph{one direction} by matching the characters of the pattern with characters in the BWT~\cite{Burrows94ablock-sorting} (i.e., extending a suffix of the pattern character by character to the left).
 It was later extended to the 2FM index by Lam et al. \cite{Lam_High_2009} and Schnattinger et al. \cite{Schnattinger_Bidirectional_2012}. The 2FM index allows to search in both directions, that means we can extend an infix of a pattern arbitrarily to the left or to the right.
In order to reduce its space requirements, the count tables are in practice replaced by efficient bit vector data structures with rank support (see for example \cite{jacobson1988succinct}).
The search method of Lam et al. can conduct one search step in a 2FM index in time $\Oh(\sigma)$ while needing space $\Oh(\sigma \cdot n)$ using constant time rank queries on bit vectors. Schnattinger et al. improved this time to $\Oh(\log \sigma)$ while using $\Oh(\log \sigma \cdot n)$ bits of space for both, the FM and 2FM index. This is achieved by the use of binary wavelet trees introduced by Grossi et al. \cite{Grossi_High_2003}. In the last years several theoretical results appeared that improved on this. However, none of those has found a way into a practical implementation.
 
In this paper we introduce a new method for conducting an exact search in a uni- and bidirectional FM index that needs $\Oh(1)$ time per step while using $\Oh(\log \sigma \cdot n) + \oh(\log \sigma \cdot \sigma \cdot n)$ bits of space. This is done by replacing the binary wavelet tree by a new data structure, the \emph{Enhanced Prefixsum Rank dictionary} (\EPR-dictionary). To our knowledge
 this is the first implementation of a constant time method for 2FM indices.
 We will show, that the method outperforms other implementations by several factors 
 at the expense of a slight increase in memory usage resulting in a very practical method.

In the following paragraph we will review the concepts of the FM and 2FM index as well as constant time rank queries very shortly (readers unfamiliar with this can find a more detailed description in the appendix).

\subsection{Introduction to the FM and 2FM Index}

Given a text $T$ of length $n$ over an ordered, finite alphabet $\Sigma = \{c_1, \dots, c_{\sigma}\}$ with $\forall \, 1 \leq i < \sigma: c_{i} <_{lex} c_{i+1}$, let $T[i]$ denote the character at position $i$ in $T$, $\cdot$ the concatenation operator and $T[1..i]$ the prefix of $T$ up to the character at position $i$. $T^{rev}$ represents the reversed text. We assume that $T$ ends with a sentinel character $\$\notin \Sigma$ that does not occur in any other position in $T$ and is lexicographically smaller than any character in $\Sigma$. The FM index needs the Burrows-Wheeler transform (BWT) of $T$. The BWT is the concatenation of characters in the last column of all lexicographically sorted cyclic permutations of the string $T$. We will refer to the BWT as $L$.

In contrast to suffix trees or suffix arrays, where a prefix $P$ of a pattern is extended by characters to the right (referred to as forward search $P \rightarrow Pc$ for $c \in \Sigma$), the FM index can only be searched using backward searches, i.e., extending a suffix $P'$ by characters to the left, $P' \rightarrow cP'$. Performing a single character backward search of $c$ in the FM index will require two pieces of information. First, $C(c)$, the number of characters in $L$ that are lexicographically smaller than $c$, second, $Occ(c, i)$, the number of $c$'s in $L[1..i]$. Given a range $[a, b]$ for $P$; i.e., the range in the sorted list of cyclic permutations that start with $P$, we can compute the range $[a^\prime, b^\prime]$ for $cP$ as follows: $[a^\prime, b^\prime]$ = $[C(c) + Occ(c, a - 1) + 1, C(c) + Occ(c, b)]$.

The 2FM index maintains two FM indices $\Index$ and $\Index^{rev}$, one for the original text $T$ and one for the reversed text $T^{rev}$. Searching a pattern left to right on the original text (i.e., conducting a forward search) corresponds to a backward search in $\Index^{rev}$; searching a pattern right to left in the original text corresponds to a backward search in  $\Index$. The difficulty is to keep both indices \emph{synchronized} whenever a search step is performed. W.l.o.g. we assume that we want to extend the pattern to the right, i.e., perform a forward search $P \rightarrow Pc_j$ for some character $c_j$. First, the backward search $P^{rev} \rightarrow c_jP^{rev}$ is carried out on $\Index^{rev}$ and its new range $[a^\prime_{rev}, b^\prime_{rev}] = [C(c_j) + Occ(c_j, a_{rev} - 1) + 1, C(c_j) + Occ(c_j, b_{rev})]$ is computed. The new range in $\Index$ can be calculated using the interval $[a,b]$ for $P$ in $\Index$ and the range size of the reversed texts index $[a^\prime, b^\prime] = [a+smaller,a+smaller+ b^\prime_{rev} - a^\prime_{rev}]$. To compute $smaller$, e.g. Lam et al. \cite{Lam_High_2009} propose to perform $\Oh(\sigma)$ backward searches $P^{rev} \rightarrow c_iP^{rev}$ for all $1 \leq i < j$ and sum up the range sizes, i.e., $smaller = \sum_{1 \leq i < j} Occ(c_i, b_{rev}) - \sum_{1 \leq i < j} Occ(c_i, a_{rev} - 1)$ leading to a total running time of $\Oh(\sigma)$.

The implementation of the occurrence table $Occ$ is usually \emph{not} done by storing explicitly the values of the entire table. Instead of storing the entire $Occ : \Sigma \times \{1,\dots,n\} \rightarrow \{1,\dots,n\}$ one uses the more space-efficient \emph{constant time rank dictionary}: for every $c \in \Sigma$ a bit vector $B_c[1..n]$ is constructed such that $B_c[i] = 1$ if and only if $L[i] = c$. Thus the occurrence value equals the number of $1$'s in $B_c[1..i]$, i.e., $Occ(c, i) = rank(B_c, i)$. Jacobson \cite{jacobson1988succinct} showed that rank queries can be answered in constant time using only $\oh(n)$ additional space per bit vector by employing a sum of two count arrays (i.e., \emph{blocks} and \emph{superblocks}) and a final \emph{in-block} count. Since then many other constant time rank query data structures have been proposed. For an overview we refer the reader to \cite{Navarro_Fast_2012} containing a comparison of various implementations. For readers unfamiliar with \emph{2-level rank dictionaries}, an explanation is given in the appendix.

\subsection{Recent improvements on the FM and 2FM index}
For large alphabets, it is not practical to maintain for each character a bit vector with rank support. In 2003 Grossi et al.~\cite{Grossi_High_2003} proposed the use of a more space efficient data structure for the FM index, called the \emph{(binary) wavelet tree} (WT) that was later used by Schnattinger \cite{Schnattinger_Bidirectional_2012} for an implementation of bidirectional FM indices. It is a binary tree of height $\Oh(\log \sigma)$ with a bit vector of length $n$ with rank support at each level. This reduces the space consumption by a factor of $\Oh(\frac{\log \sigma}{\sigma})$  in trade-off for an increased running time of $\Oh(\log \sigma)$. Schnattinger used the fact that not only the rank query for a given character $c$ can be computed in $\Oh(\log \sigma)$ but also the $smaller$ value can be computed in the same asymptotic time which is quite convenient for the 2FM index. Ferragina et al. proposed a new data structure in 2007~\cite{ferragina2007compressed}, the multi-ary wavelet tree, which could be used to speed up the needed rank queries of 2FM indices. In 2013 Belazzougui et al. proposed the first constant-time bidirectional FM index \cite{belazzougui2013versatile} using minimal perfect hashing, of which to our knowledge no implementation exists (see also \cite{Belazzougui_Optimal_2015} for an extended version). Our solution is based on bit vectors with rank support, which proved so far to be very fast in practice, in particular due to the {\tt popcount} machine operation.

\section{Theoretical results}
\label{sec:main}

In this section we present the main results of this paper. They are based on a simple observation and a new bit vector data structure with rank query support which allows us to improve upon the results of Lam and Schnattinger.
Our proposed method runs in constant time per step while using $\Oh(\log \sigma \cdot n) + \oh(\log \sigma \cdot \sigma \cdot n)$ bits of space for small alphabets (i.e., $\sigma < \log(n) / \log\log(n)$) which is in theory inferior in space consumption to the results of mentioned above (see \cite{belazzougui2013versatile}), but in practice very fast, and presents to our knowledge the first constant time implementation of 2FM indices with this space complexity.

Our first observation is simple. Instead of defining a bit vector for each $c \in \Sigma$ to map characters equal to $c$ in $L$ to $1$'s, we suggest using prefix sum bit vectors $PB_c$, i.e., $PB_c[i] = 1$ if and only if $L[i] \leq_{lex} c$ for $c \in \Sigma$. 

\begin{theorem}\label{thm:prefixsum}
A step in a bidirectional search can be performed in time $\Oh(1)$ using $\Oh(\sigma\cdot n)$ bits of space.
\end{theorem}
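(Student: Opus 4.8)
The plan is to show that every quantity appearing in one bidirectional step can be obtained from a constant number of rank queries on the prefix sum bit vectors, and then to bound the space. The two identities that make the $PB_c$ useful follow directly from the definition $PB_c[i]=1$ iff $L[i]\leq_{lex}c$: a query $rank(PB_{c_j},k)$ counts the positions of $L[1..k]$ holding a character at most $c_j$. Hence the occurrence count of a single character is the difference of two consecutive prefix counts,
\[
Occ(c_j,k)=rank(PB_{c_j},k)-rank(PB_{c_{j-1}},k),
\]
so the backward search on $\Index^{rev}$, namely $[a'_{rev},b'_{rev}]=[C(c_j)+Occ(c_j,a_{rev}-1)+1,\,C(c_j)+Occ(c_j,b_{rev})]$, is evaluated with $\Oh(1)$ rank queries once the small $C$ table is stored.

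The decisive step, which removes the $\Oh(\sigma)$ factor of Lam et al., concerns $smaller$. The sum $\sum_{1\leq i<j}Occ(c_i,k)$ counts exactly those positions of $L[1..k]$ holding a character strictly smaller than $c_j$, i.e. at most $c_{j-1}$, and this is by definition $rank(PB_{c_{j-1}},k)$. Therefore the telescoping sum collapses to a single pair of queries,
\[
smaller=rank(PB_{c_{j-1}},b_{rev})-rank(PB_{c_{j-1}},a_{rev}-1),
\]
instead of the $\Theta(\sigma)$ queries incurred by performing one backward search per smaller character. Substituting this into $[a',b']=[a+smaller,\,a+smaller+b'_{rev}-a'_{rev}]$ finishes the update of the range in $\Index$, so the whole forward step $P\to Pc_j$ uses only $\Oh(1)$ rank queries. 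A backward step $P\to c_jP$ is handled symmetrically, using prefix sum vectors built on $L^{rev}$ as well.

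For the time bound I would then invoke the constant-time two-level rank dictionary recalled above: each of the $\Oh(1)$ rank queries is answered in $\Oh(1)$ time, so a bidirectional step costs $\Oh(1)$ time. For the space, the construction keeps one prefix sum bit vector of length $n$ per character for each of $\Index$ and $\Index^{rev}$, that is $\Oh(\sigma)$ bit vectors, each augmented with its $\oh(n)$ bits of rank support. The total is $\Oh(\sigma\cdot n)+\oh(\sigma\cdot n)=\Oh(\sigma\cdot n)$ bits, the $C$ tables contributing only $\Oh(\sigma\log n)$ further bits.

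I expect the one point requiring care to be the identity for $smaller$: one must verify that summing the single-character occurrence counts over all $c_i<c_j$ indeed equals the prefix count of characters $\leq c_{j-1}$, and check the boundary cases, namely the index $a_{rev}-1$ and the extremal characters $c_1$ (where the relevant prefix vector is the all-zero vector, giving $smaller=0$) and $c_\sigma$ (where $PB_{c_\sigma}$ is trivial and need not be stored). Everything else is a direct substitution of the two prefix sum identities into the recurrences recalled above.
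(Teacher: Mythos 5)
Your proof is correct and follows essentially the same route as the paper: prefix sum bit vectors $PB_c$ with constant-time rank support, $Occ(c_j,k)$ as the difference $rank(PB_{c_j},k)-rank(PB_{c_{j-1}},k)$, and $smaller$ collapsing to a prefix rank query on $PB_{c_{j-1}}$. You merely spell out the telescoping identity for $smaller$ and the boundary cases ($c_1$, $c_\sigma$, the two indices per FM index) that the paper's one-paragraph proof leaves implicit.
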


\begin{proof}
We define $\text{\it Prefix-Occ}(c_j, i)=rank(PB_{c_j},i)$; that means it counts the number of occurrences of a character lexicographically smaller or equal than $c_j$ up to position $i$. $\text{\it Prefix-Occ}(c_j, i)$ and thus the $smaller$ value for the 2FM index can now be computed by a single rank query $rank(PB_{c_j}, i)$, the original $Occ(c_j, i)$ value for backward searches needs only two rank queries and a subtraction, namely $Occ(c_j, i)=rank(PB_{c_j}, i) - rank(PB_{c_{j-1}}, i)$ (for the lexicographically smallest character $c_0$ no subtraction is necessary).
\end{proof}

Note that the bit vector for the lexicographically largest character can be omitted, since all bits will be set to $1$ and thus $rank(PB_{c_{\sigma}}, i) = i, \;\forall\, 1 \leq i \leq n$.

Our next idea is the main result of this work and will allow us to reduce the space complexity for both the FM  and the 2FM index while maintaining the optimal running time of $\Oh(1)$ per search step.
Instead of using normal bit vectors we use directly the binary encoding of the BWT (an idea already used by BWT-SW\cite{Lam_Compressed_2008}).
We call our data structure \emph{EPR-dictionary}, short for \emph{Enhanced Prefixsum Rank dictionary}.

\subsection{The EPR-dictionary}
\label{sec:eprdict}
The general idea of the EPR-dictionary is as follows.
Assuming an ordered alphabet $\Sigma=\{c_1,\ldots,c_{\sigma}\}$, each character $c_i$ is encoded by the binary value $ord_2(c_i)$ of its rank $i$.
Conceptually, we use the binary representation of the BWT to derive from it a \emph{spaced} bit vector representation for $PB_c$ for each $s\in\Sigma$.
Then we compute the auxiliary data structures (i.e., blocks and superblocks (see also appendix))
for each of those vectors. After we have those auxiliary structures which only need $o(n)$ bits space, we \emph{delete} the bit vectors and only retain the BWT. In practice the blocks and superblocks
are computed directly by a linear scan on the BWT. For the last in-block query, we show how to derive the counts $\text{\it Prefix-Occ}(c_j, i)$ from the BWT in constant time using a number of logical and arithmetic operations.

W.l.o.g. we assume that the block length is an even multiple of $\log \sigma$ to avoid case distinctions in the proof. In practice this holds since the in-block query is performed with popcounts on registers the length of which is a power of $2$. All bitmasks used for computing the in-block rank are exactly as long as a block.
For a character $c_j$ we define the \emph{rank bitmask} $rb(c_j)$ to be a binary sequence of concatenations of the pattern $0^{\lceil\log\sigma\rceil-1}1\cdot ord_2(c_j)$, i.e., $\lceil\log\sigma\rceil-1$ many bits set to $0$ followed by a $1$ followed by the binary encoding of the character $c_j$. For the DNA alphabet with $\Sigma = \{A, C, G, T\}$ and its binary encodings $\{00, 01, 10, 11\}$ the rank bitmask for $G \in \Sigma$ is for example $rb(G)=01\cdot 10\cdot 01\cdot 10\ldots$.

Counting the characters inside a block is done in two steps. The characters at even and odd positions are counted separately to generate space for an overflow bit. Therefore we need a bitmask $M_E$ masking characters at even positions from the bit vector. $M_E$ has $1$s for each even block of length $\lceil\log\sigma\rceil$, i.e., $M_E=00\cdot 11\cdot 00\cdot 11\cdot 00\cdot 11\ldots$. Finally, we need a bitmask $BM$ which filters out the lowest bit of each odd $\log\sigma$ block, i.e., $BM=01\cdot 00 \cdot 01\cdot 00\cdot 01\cdot 00\ldots$ for $\sigma=4$.

\emph{Step 1.} We first take the characters at odd positions inside the corresponding block of the BWT, subtract it from $rb(c_i)$, which will result in the rightmost bit of
 even character positions to be set to $1$ if and only if the character to the right is smaller or equal to $c_i$. We then obtain exactly those bits by masking with $BM$.

\[ B_{EPR}(c_i)_E = (rb(c_i) - (BWT\&M_E))\&BM \]

\emph{Step 2.} We then take the characters at even positions inside the corresponding block of the BWT by shifting them $\lceil\log\sigma\rceil$ bits to the right and masking with $M_E$. We can now continue as in step 1 by subtracting it from $rb(c_i)$, which will again result in a $1$ bit in the rightmost bit of
even character positions to be set to $1$ if and only if the character to the right is smaller or equal to $c_i$. We then apply the bitmask $BM$ to filter only these rightmost bits.

\[ B_{EPR}(c_i)_O = (rb(c_i) - ((\gg_{\lceil\log\sigma\rceil} BWT) \& M_E)) \& BM \]

Finally both bit vectors are merged with one of them shifted by $1$ to the left avoiding the rightmost bits to overlap. In practice this is faster than two popcount operations.

\[ B_{EPR}(c_i)=  B_{EPR}(c_i)_E | (\ll_{1} B_{EPR}(c_i)_O) \]

Since we used the binary encoding of the BWT, note that the underlying rank queries have to be adapted to $\text{\it Prefix-Occ}(c_j, i) = rank(B_{\text{\it EPR}}(c_j), (i-1) \cdot \left\lceil \log \sigma \right\rceil+1)$. It follows directly that $Occ(c_j, i)$ can by computed in constant time by observing that

\[
  Occ(c_j, i) = \begin{cases}
    \text{\it Prefix-Occ}(c_j, i) - \text{\it Prefix-Occ}(c_{j-1}, i) &\mbox{if } j > 0 \\
    \text{\it Prefix-Occ}(c_j, i) & \mbox{otherwise}
  \end{cases}
\]

\newcommand{\redBit}[1]{\begingroup\color{red}\textbf#1\endgroup}
\begin{figure}[h]
\centering
      \subfloat[step 1]{

\begin{tabular}{llcccccccc}
  $rb(G)$  & $\qquad$ & $01$  & $10$  & $01$  & $10$  & $01$  & $10$  & $01$  & $10$  \\
  $BWT \& M_E\phantom{\gg_{\lceil\log\sigma\rceil} ()}\quad$  & $-$  & $00$ & $01$ & $00$ & $01$ & $00$ & $11$ & $00$ & $11$  \\
      &  & - & (C) &  - & (C) & - & (T) & - &  (T) \\
  \hline
           & $=$  & $0\redBit{1}$ & $01$ & 0$\redBit{1}$ & $01$ & $0\redBit{0}$ & $11$ & $0\redBit{0}$ & $11$ \\
  BM  & $\&$ & $01$ & $00$ & $01$ & $00$ & $01$ & $00$ & $01$ & $00$ \\
  \hline
  $B_{\text{\it EPR}}(G)_E$ & $=$  & $0\redBit{1}$ & $00$ & $0\redBit{1}$ & $00$ & $0\redBit{0}$ & $00$ & $0\redBit{0}$ & $00$ \\
\end{tabular}

      }\qquad
      \subfloat[step 2]{

\begin{tabular}{llcccccccc}
  $rb(G)$  & $\qquad$ & $01$  & $10$  & $01$  & $10$  & $01$  & $10$  & $01$  & $10$  \\
  $(\gg_{\lceil\log\sigma\rceil} BWT) \& M_E\quad$  & $-$  & $00$ & $00$ & $00$ & $10$ & $00$ & $10$ & $00$ & $00$  \\
      &  & - & (A) & - & (G) & - & (G) & - & (A) \\
  \hline
           & $=$  & $0\redBit{1}$ & $10$ & $0\redBit{1}$ & $00$ & $0\redBit{1}$ & $00$ & $0\redBit{1}$ & $10$ \\
  BM  & $\&$ & $01$ & $00$ & $01$ & $00$ & $01$ & $00$ & $01$ & $00$ \\
  \hline
  $B_{\text{\it EPR}}(G)_O$ & $=$  & $0\redBit{1}$ & $00$ & $0\redBit{1}$ & $00$ & $0\redBit{1}$ & $00$ & $0\redBit{1}$ & $00$ \\
\end{tabular}

      }\qquad
      \subfloat[retrieving $B_{EPR}(G)$]{

       \begin{tabular}{llllllllll}
         $B_{\text{\it EPR}}(G)_E$ & $\qquad$ & $\phantom{.}01$ & $\phantom{.}00$ & $\phantom{.}01$ & $\phantom{.}00$ & $\phantom{.}00$ & $\phantom{.}00$ & $\phantom{.}00$ & $\phantom{.}00$ \\
         $\ll_{1} B_{\text{\it EPR}}(G)_O\qquad\qquad$ & $|$ & $\phantom{.}10$ & $\phantom{.}00$ & $\phantom{.}10$ & $\phantom{.}00$ & $\phantom{.}10$ & $\phantom{.}00$ & $\phantom{.}10$ & $\phantom{.}00$ \\
         \hline
         $B_{\text{\it EPR}}(G)$ & $=$ & $\phantom{.}11$ & $\phantom{.}00$ & $\phantom{.}11$ & $\phantom{.}00$ & $\phantom{.}10$ & $\phantom{.}00$ & $\phantom{.}10$ & $\phantom{.}00$ \\
         $popcount$ & $=$ & $\phantom{.}6$ & & & & & & & \\
       \end{tabular}
             }

\caption{An example for $\Sigma=\{A,C,G,T\}$ that shows how to perform an in-block rank query for characters smaller or equal to $G$ of the BWT substring $ACGCGTAT$. The resulting bit vector $B_{\text{\it EPR}}(G)$ has a $1$ for each character smaller or equal to $G$., i.e., all positions except those with a $T$.}\label{fig:prtransform}
\end{figure}

The EPR-transformed bit vector $B_{\text{\it EPR}}(c_j)$ is now a ''normal'' bit vector and thus allows us to compute the prefix sums for a string in constant time. This improves the running time of the 2FM index and makes it optimal in terms of speed.

Let us now take a look at the space consumption. For our exposition we define the block length of $\ell = \left\lfloor \frac{\log n}{2} \right\rfloor$ (if $\ell$ is not a multiple of $\lceil\log\sigma\rceil$ padding strategies can be applied).
Given a $B_{\text{\it EPR}}(c_j)$, for the $m$-th superblock we count the number of $1$'s (i.e., the number of occurrences of characters smaller or equal to $c_j$ in the corresponding BWT) from the beginning of $B_{\text{\it EPR}}$ to the end of the superblock in $M'[m][j]=rank(B_{\text{\it EPR}}(c_j), m\cdot \ell^2)$. As there are $\left\lfloor\frac{\lceil\log\sigma\rceil \cdot n}{\ell^2}\right\rfloor$ superblocks and $\sigma$ characters, $M'$ can be stored in
\[\Oh\left(\sigma\cdot\frac{\log\sigma\cdot n}{\ell^2} \cdot \log n\right)=\Oh\left(\sigma\cdot\log\sigma\cdot\frac{n}{\log n}\right)=\oh(\sigma \log \sigma \cdot n)\] bits.
For the $m$-th block we count the number of $1$'s from the beginning of the overlapping superblock to the end of the block in $M[m][j]=rank\big(B_{\text{\it EPR}}[1+k\ell..n](c_j), (m-k)\ell\big)$ where $k=\left\lfloor\frac{m-1}{\ell}\right\rfloor\ell$ is the number of blocks left of the overlapping superblock. $M$ has $\left\lfloor\frac{\lceil\log \sigma\rceil \cdot n}{\ell}\right\rfloor$ entries for every character and can be stored in
\[\Oh\left(\sigma\cdot\frac{\log \sigma\cdot n}{\ell} \cdot \log \ell^2\right)=\Oh\left(\sigma\cdot\log \sigma \cdot\frac{ n\cdot\log\log n}{\log n}\right)=\oh(\sigma \log \sigma \cdot n)\] bits.

Let $P$ be a precomputed lookup table such that for each possible infix $V$ of a bit vector $B_{\text{\it EPR}}(c_j)$ of length $\ell$, $i\in\left[1..\left\lfloor\frac{\ell}{\log\sigma}\right\rfloor\right]$ and $c_j \in \Sigma$ holds $P[V][i]=rank(V, (i-1)\cdot\lceil\log\sigma\rceil + 1)$. There are $2^\ell \cdot \left\lfloor\frac{\ell}{\log\sigma}\right\rfloor$ entries of value at most $\left\lfloor\frac{\ell}{\log\sigma}\right\rfloor$ and thus can be stored in \[\Oh\left(2^\ell \cdot \frac{\ell}{\log\sigma} \cdot \log\frac{\ell}{\log\sigma}\right)=\Oh\left(2^\frac{\log n}{2}\cdot\log (n-\sigma) \cdot \log\log (n-\sigma)\right)= \]
\[\Oh\left(\sqrt{n}\cdot \log n \cdot \log\log n \right)=\oh(n)\]

  bits. Note that we do need this lookup table only \emph{once}, since the position and counting of the bits set to $1$ is the same for all characters.

Equivalent to Theorem 1, we do not need to store blocks and superblocks for $c_{\sigma}$ since $rank(B_{\text{\it EPR}}(c_{\sigma}), i) = i, \;\forall\, 1 \leq i \leq n$.

\begin{theorem}[Constant time prefix sum query]\label{thm:ctr}
One search step in an FM index or 2FM index can be performed in $\Oh(1)$ time using $O(\log \sigma \cdot n) + \oh(\log \sigma\cdot\sigma\cdot n)$ bits of space.
\end{theorem}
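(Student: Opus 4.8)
The plan is to show that a single search step decomposes into a constant number of rank queries on the $\text{\it EPR}$-transformed vectors, to argue that each such query runs in $\Oh(1)$ time, and then to obtain the space bound by collecting the estimates already derived above. First I would invoke Theorem~\ref{thm:prefixsum} together with the displayed case distinction for $Occ(c_j,i)$: a forward step in the 2FM index needs one $\text{\it Prefix-Occ}$ evaluation for the $smaller$ value, while a backward step needs $Occ(c_j,\cdot)$ at the two endpoints of the current interval, i.e.\ a fixed number of $\text{\it Prefix-Occ}$ queries. Hence it suffices to evaluate a single query $rank(B_{\text{\it EPR}}(c_j),p)$ in constant time.

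Next I would spell out the three-level decomposition of this rank query as the sum of the superblock count $M'[m'][j]$, the block count $M[m][j]$, and an in-block count over the residual block of $\ell=\lfloor\log n/2\rfloor$ bits. The first two summands are plain array lookups and cost $\Oh(1)$. For the in-block count I would appeal to the construction of Section~\ref{sec:eprdict}: Steps~1 and~2 perform a fixed number of shifts, subtractions and masks involving $rb(c_j)$, $M_E$ and $BM$, and the final merge followed by one \emph{popcount} (equivalently, one lookup in $P$) returns the count. Working in the word-RAM model with word size $w=\Theta(\log n)$, a block of $\ell\leq w$ bits occupies $\Oh(1)$ machine words, so each of these operations is $\Oh(1)$ and the whole query is $\Oh(1)$.

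The hard part, and the only step requiring genuine argument, is the \emph{correctness} of the in-block procedure: that after masking with $BM$ the guard bit of each encoded position is set exactly when the corresponding character is $\leq_{lex} c_j$. I would prove this by examining one field in isolation. Within each repeated block $0^{\lceil\log\sigma\rceil-1}1\cdot ord_2(c_j)$ of $rb(c_j)$, subtracting an encoded character $x$ computes $2^{\lceil\log\sigma\rceil}+ord_2(c_j)-x$; the leading guard bit stays $1$ precisely when $x\leq ord_2(c_j)$, and---crucially---it absorbs the borrow so that the subtraction never propagates into a neighbouring field. This borrow-isolation is what makes the fields independent and is the technical heart of the theorem. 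I would then note that the even/odd split via $M_E$ is needed because each guarded block can test only one of the two characters it spans, and that the one-bit shift before the $|$ in the merge keeps the even and odd result bits from colliding; correctness for the whole block follows by independence of the fields.

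Finally, for the space bound I would simply sum the estimates already obtained. We discard the explicit bit vectors and retain only the binary BWT, which occupies $n\lceil\log\sigma\rceil=\Oh(\log\sigma\cdot n)$ bits. The auxiliary tables were shown above to cost $\oh(\sigma\log\sigma\cdot n)$ bits for the superblocks $M'$, $\oh(\sigma\log\sigma\cdot n)$ bits for the blocks $M$, and $\oh(n)$ bits for the shared lookup table $P$; the fixed bitmasks $rb(c_j)$, $M_E$, $BM$ contribute only lower-order terms, and the vector for $c_\sigma$ is omitted. Adding these yields a total of $\Oh(\log\sigma\cdot n)+\oh(\log\sigma\cdot\sigma\cdot n)$ bits, which together with the constant-time bound establishes the theorem.
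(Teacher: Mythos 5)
Your proposal is correct and follows essentially the same route as the paper: the paper's own proof of Theorem~\ref{thm:ctr} is a two-sentence summary pointing back to the block/superblock/lookup-table space estimates and to the constant number of word operations for the in-block count, which are exactly the ingredients you assemble (reduction of a search step to $\Oh(1)$ rank queries via Theorem~\ref{thm:prefixsum}, the three-level decomposition, and the $\Oh(\log\sigma\cdot n)+\oh(\sigma\log\sigma\cdot n)$ accounting). The only place you go beyond the paper is the explicit borrow-isolation argument showing the guard bit of each field equals $1$ iff $x\leq ord_2(c_j)$ and absorbs the borrow; the paper only illustrates this by example, so your addition is a correct strengthening rather than a deviation.
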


\begin{proof}
	The $BWT$ can be stored in $\Oh(\log \sigma \cdot n)$, all the blocks, superblocks, and lookup table $P$ in $\oh(\log \sigma \cdot \sigma \cdot n)$ bits. A prefix sum rank query requires one superblock and block lookup as well as a constant number of arithmetic and logical operations on the last block which run all in constant time.
\end{proof}

\section{Experimental results}
\label{sec:experiments}

In this Section we will conduct computational experiments to validate our theoretical findings and to compare our FM and 2FM indices to another available implementation.
All of our tests were conducted on Debian GNU/Linux 7.1 with Intel® Xeon® E5-2667V2 CPUs at fixed frequency of 3.3 GHz to prevent dynamic overclocking effects. All data was stored on tmpfs, a virtual file system in main memory to prevent loading data just on demand during the search and thus effecting the speed of the search by I/O operations.

In the first part of the experiments we will test FM and 2FM indices with our new data structure (EPR) in comparison to the wavelet tree (WT) implementation which was previously the generic standard implementation in SeqAn~\cite{SeqAn08}. Additionally we will run the same benchmarks for other available 2FM implementations, namely the bidirectional wavelet tree by Schnattinger et al.~\cite{Schnattinger_Bidirectional_2012} which we will call 2SCH and the balanced wavelet tree implementation with plain bit vectors and constant-time rank support in the SDSL~\cite{gbmp2014sea} which we will refer to as 2SDSL.

The 2BWT by Lam et al. \cite{Lam_Compressed_2008} is unfortunately not generic and only works for DNA alphabets. We also were not able to retrieve all hits when switching between forward and backward searches on the same pattern. Unfortunately we couldn't reach the authors and thus excluded 2BWT from our comparisons.

Another implementation that is worth mentioning is the affix array by Meyer et al.~\cite{Meyer_Structator_2011}. Even though the affix array implementation is generic, the construction algorithm did not terminate for alphabets other than DNA in a reasonable amount of time (several days). Unfortunately the affix array is not stand-alone but part of an application and does not provide a documented interface. Hence we were not able to include the affix array in our tests within a reasonable time frame. Meyer compares the running time of their affix array implementation with 2SCH and states that the affix array is faster by a factor of $1.26$ to $2$. From that we can conclude that our 2FM index implementation using the EPR-dictionary is expected to be faster than the affix array implementation (see below).

\subsection{Runtime and space consumption}
For the first benchmark we want to make a comparison with alphabets of different sizes to test the predicted independence from $\sigma$ for the EPR implementation. The alphabet sizes are inspired by bioinformatics applications and are of size 4 (DNA), 10 (reduced amino acid alphabet \emph{Murphy10}), 16 (IUPAC alphabet) and 27 (protein alphabet).

We first generated a text of length $10^8$ with a uniform distribution and sampled 1 million queries of length $50$ from this text. The search in the FM and 2FM indices will determine the number of occurrences of the sampled strings. Our sampling will ensure that the text occurs at least once and the stepwise search is never prematurely stopped. This ensures that we have $50$ million single steps in searches. The unidirectional FM indices perform backward searches while for 2FM indices we search the right half of the query first (using forward searches) and then extend the other half of the pattern to the left by backward searches.

In the following we will refer to WT and EPR as unidirectional FM indices and to 2WT and 2EPR as bidirectional FM indices, all part of the SeqAn library.

Table \ref{tab:runningtime} gives an overview of the running times of all FM and 2FM index implementations. It shows the absolute runtimes as well as the speedup factor relative to the unidirectional resp. bidirectional wavelet tree implementation.
WT, 2WT, 2SCH, 2SDSL are all based on wavelet trees. Our bidirectional wavelet tree implementation 2WT has a similar runtime compared to 2SDSL. It is slightly faster especially for small alphabets.

\begin{table}[h]
\begin{center}
\begin{tabular}{|l||r|r||r|r||r|r||r|r|}
\hline
 \multicolumn{1}{|l||}{}      & \multicolumn{2}{c}{DNA} & \multicolumn{2}{c}{Murphy10} & \multicolumn{2}{c}{IUPAC} & \multicolumn{2}{c|}{Protein} \\
Index & time & factor & Time & factor & Time & factor & Time & factor \\
\hline
WT		&	6.59s	&	1.00	&	16.97s	&	1.00	&	21.61s	&	1.00	&	26.87s	&	1.00 \\
\textbf{EPR}     &	\textbf{3.63s}	&	\textbf{1.82}	&	\textbf{5.35s}	&	\textbf{3.17}	&	\textbf{5.65s}	&	\textbf{3.83}	&	\textbf{6.20s}	&	\textbf{4.34} \\
\hline
2WT		&	9.32s	&	1.00	&	19.15s	&	1.00	&	23.44s	&	1.00	&	28.83s	&	1.00 \\
\textbf{2EPR}    &	\textbf{4.69s}	&	\textbf{1.99}	&	\textbf{5.78s}	&	\textbf{3.31}	&	\textbf{5.67s}	&	\textbf{4.13}	&	\textbf{6.21s}	&	\textbf{4.64} \\
2SDSL   &	12.21s	&	0.76	&	20.58s	&	0.93	&	24.43s	&	0.96	&	29.76s	&	0.97 \\
2SCH    &	14.08s	&	0.66	&	22.18s	&	0.86	&	26.11s	&	0.90	&	31.81s	&	0.91 \\
\hline
\end{tabular}
\end{center}
\caption{Runtimes of various implementations in seconds and their speedup factors with respect to the unidirectional wavelet tree.}
\label{tab:runningtime}
\end{table}

Compared to the wavelet tree implementations the EPR implementation is between 80\% (for DNA) and 330\% (Protein) faster for unidirectional indices and between 100\% (for DNA) and 360\% (Protein) faster for bidirectional indices.

Since we anticipate the application of 2EPR to bioinformatics applications, we also compared the runtime of all implementations using the complete human genome sequence. We again searched one million sampled strings of length 50 exactly as described above. The relative results were very similar to the ones in Table \ref{tab:runningtime}, indeed even slightly better. 2SCH crashed with this data set. 2SDSL was the slowest implementation (15.44s) followed by 2WT (1.6 times as fast) and by 2EPR (2.8 times as fast as 2SDSL) which was again the fastest implementation.

 Our experiments also show that we were indeed able to eliminate the $\log \sigma$ factor of wavelet trees in practice, as predicted by Theorem \ref{thm:ctr}. While the runtime for the WT implementations grows for larger alphabets with $\log \sigma$ the runtime of EPR and 2EPR increases only slightly for larger alphabets which can be explained by larger indices and therefore more cache misses. This can be seen in the following Figure in which we plot the runtime for EPR for different alphabets and the runtime of WT divided by $\log\sigma$. The resulting times develop very similarly.

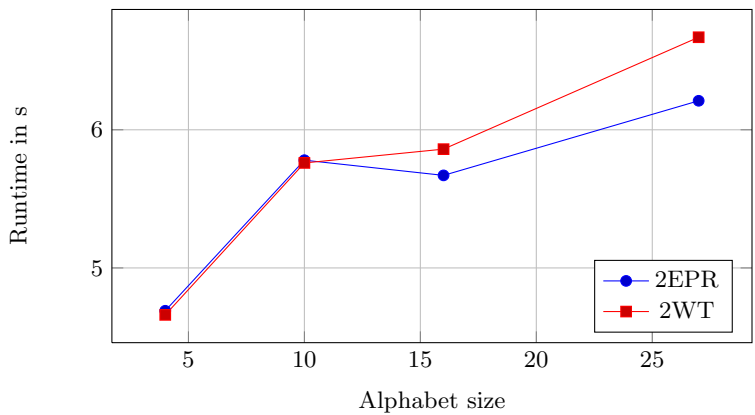
\begin{figure}[h]
	\begin{center}
        \begin{tikzpicture}
            \begin{axis}[height=6cm, width=10cm, grid=major, legend pos=south east,
                xlabel=Alphabet size, ylabel=Runtime in s]
                \addplot coordinates {
                    ( 4, 4.69)
                    (10, 5.78)
                    (16, 5.67)
                    (27, 6.21)
                };
                \addlegendentry{2EPR}
                \addplot coordinates {
                    ( 4, 4.66)
                    (10, 5.76)
                    (16, 5.86)
                    (27, 6.67)
                };
                \addlegendentry{2WT}
            \end{axis}
        \end{tikzpicture}
	\end{center}
	\caption{Plot of the runtime for EPR for different alphabets and the runtime of WT divided by $\log \sigma$.}
\end{figure}

When we compare the runtimes of the EPR and 2EPR, they behave as expected, i.e., the unidirectional index is slightly faster, since in each step of the bidirectional index we have to synchronize two indices.

All indices implemented in SeqAn (WT, EPR, 2WT, 2EPR) support up to 3 levels for rank dictionary support: blocks, superblocks and ultrablocks. The tests presented here were performed with a 2-level rank dictionary similar to the one explained in Section \ref{sec:eprdict} (or in the Appendix). Table \ref{tab:space} illustrates the practical space consumption for all previously discussed indices and of the affix array for DNA (larger alphabets did not finish within several days).

 Please note, that the other implementations may use versions of rank dictionaries different to the simple one explained in Section  \ref{sec:eprdict}. The numbers of FM indices given in Table \ref{tab:space} do neither account for storing the text itself nor for storing a compressed suffix array necessary to locate the matches in the text since the libraries use different implementations offering various space-time trade-offs. The running time of the backward and forward searches does not depend on it and the compressed suffix array implementation is independent from the used rank dictionary and thus interchangeable. A typical compressed suffix array implementation as used in the 2SDSL takes $\frac{n}{\eta}\log n$ (when sampling on the text instead of the suffix array). For a sampling rate of $10\%\phantom{..}(\eta = 10)$ the space consumption for our experiments would be $253$ MB and thus still much smaller than the affix array.

\begin{table}[h]
\begin{center}
\begin{tabular}{|l|c|c|c|c|}
\hline
Index & DNA	 & 	Murphy10    &   IUPAC   &   Protein \\
\hline
EPR	    & 	42	&	156	&	227	&	478 \\
2EPR    &   84	&	311	&	454	&	955 \\
WT	    & 	30	&	51	&	60	&	72 \\
2WT	    & 	60	&	102	&	120	&	144 \\
2SDSL   & 	68	&	105	&	122	&	145 \\
2SCH    &   75	&	108	&	123	&	146 \\
AF	 & 2670 & - & - & - \\
\hline
\end{tabular}
\end{center}
\caption{Space consumption of the rank data structure in Megabyte of various implementations}\label{tab:space}
\end{table}

The current implementation of the EPR and 2EPR in SeqAn interleaves the bit vector (i.e., the BWT) and precomputed block values but does not interleave superblock values. Reconsidering the design and storing block and superblock values close to the corresponding bit vector region could decrease the number of cache misses for one rank query to one cache miss and thus further improve the running time.

For larger alphabets one might also consider using a 3-level rank dictionary with smaller data types for blocks and superblocks which will reduce the space consumption noticeably at the expense of a slightly higher runtime (i.e., for the protein alphabet we reduce the space consumption from $955$ MB to $581$ MB while increasing the runtime from $6.21$ to $7.67$ seconds). The increased running time is due to another array lookup and thus still constant-time per step.

\subsection{Effect of the low order terms for space consumption}

In Table \ref{tab:expchangen} we show how quickly the $\oh(\log \sigma \cdot \sigma \cdot n)$ data structures for rank queries can be neglected for growing $n$.
For the WT and EPR implementations we measured the space needed for both the DNA and the IUPAC alphabet for $n=10^4,10^5,10^6,10^7,10^8,10^9$. We then divided the space consumption of both implementations by the factor in the $\Oh$-term, namely $\log \sigma\cdot n$.

For growing $n$ the $\Oh$-term should dominate the low order $\oh$-term, hence we would expect the resulting number converge to a constant. This is indeed true, as can be seen in Table \ref{tab:expchangen}. The EPR implementation converges faster than the WT, which is expected, since our $\oh$-term is larger than the one for the WT implementations. The effect of the $\oh$-terms falls for EPR from $10^5$ to $10^6$ by $35$ resp. $6$ percent, whereas the decline for WT is steeper with $84$ and $123$ percent. From size $10^7$ on, the low order terms are clearly dominated by the $\Oh$-terms.
\begin{table}[h]
\begin{center}
\begin{tabular}{|l|r|r|r|r|r|r|r|r|}
\hline
 Method ($\sigma$) & $10^4$ & $10^5$ & $10^6$ & $10^7$ & $10^8$ & $10^9$ \\
 \hline
   EPR (4)   &  2.4000   &  0.6000  &   0.4440   &   0.4292   &   0.4276   & 0.4274   \\
   EPR (16)  &  2.0000   &  1.2400   &  1.1680   &   1.1610  &  1.1602   & 1.1601    \\
   WT (4)     &  4.4000   &  0.6400   &  0.3480   &   0.3088  &  0.3056   & 0.3053   \\
   WT (16)    &  7.0000   &  0.8000   &  0.3580   &   0.3104  &  0.3057   & 0.3053    \\
   \hline  
\end{tabular}
\end{center}
\caption{Influence of the space consumption of the $\oh$-terms with increasing $n$.}\label{tab:expchangen}
\end{table}

\section{Conclusions}
\label{sec:conclusions}

In this paper we have introduced a new data structure, the EPR-dictionary, that enables constant time prefix sum computations for arbitrary, finite alphabets in $\Oh(\log\sigma\cdot n)+\oh(\log\sigma\cdot\sigma\cdot n)$ bits of space and works directly on the BWT. This allows two important data structures, the FM and 2FM index, to perform single search steps in time $\Oh(1)$.
We implemented the dictionary in the C++ library SeqAn and used it for an implementation of an FM and 2FM index. We compared its practical performance with the previous SeqAn implementation using wavelet trees and with other openly available implementations, among them the quasi standard for succinct data structures, the SDSL. We show that the EPR-dictionary implementation supports our theoretical claims, eliminates the $\log\sigma$ factor for searching in bidirectional indices, and performs between $80\%$ and $360\%$ faster than the wavelet tree implementation at the expense of a higher memory consumption. We compared our 2FM implementation against the available, open implementation of Schnattinger et al. (2SCH). Our implementation is between $3$ to $5.1$ times faster than 2SCH and $2.6$ to $4.8$ faster than the 2SDSL. We also showed that the additional space consumption is easily tolerable on normal hardware.

\section*{Acknowledgments}
We would like to acknowledge Enrico Siragusa for his previous implementations of the FM index in SeqAn. The first author acknowledges the support of the International Max-Planck Research School for Computational Biology and Scientific Computing (IMPRS-CBSC).
We also thank Veli M\"akinen and Simon Gog for very helpful remarks on a previous version of this manuscript during the Dagstuhl seminar 16351 ''Next Generation Sequencing - Algorithms, and Software For Biomedical Applications''.

\bibliographystyle{siamart_0216/siamplain}
\bibliography{references}

\newpage

\section*{Appendix}
In the appendix we give for the reader not familiar with FM and 2FM indices a short introduction.

\subsection*{Introduction to the FM and 2FM Index}

Given a text $T$ of length $n$ over an ordered, finite alphabet $\Sigma = \{c_1, \dots, c_{\sigma}\}$ with $\forall \, 1 \leq i < \sigma: c_{i} <_{lex} c_{i+1}$, let $T[i]$ denote the character at position $i$ in $T$, $\cdot$ the concatenation operator and $T[1..i]$ the prefix of $T$ up to the character at position $i$. $T^{rev}$ represents the reversed text. We assume that $T$ ends with a sentinel character $\$\notin \Sigma$ that does not occur in any other position in $T$ and is lexicographically smaller than any character in $\Sigma$. The FM index needs the Burrows-Wheeler transform (BWT) of $T$. The BWT is the concatenation of characters in the last column of all lexicographically sorted cyclic permutations of the string $T$ (see Figure \ref{fig:bwt} for an example). We will refer to the BWT as $L$.

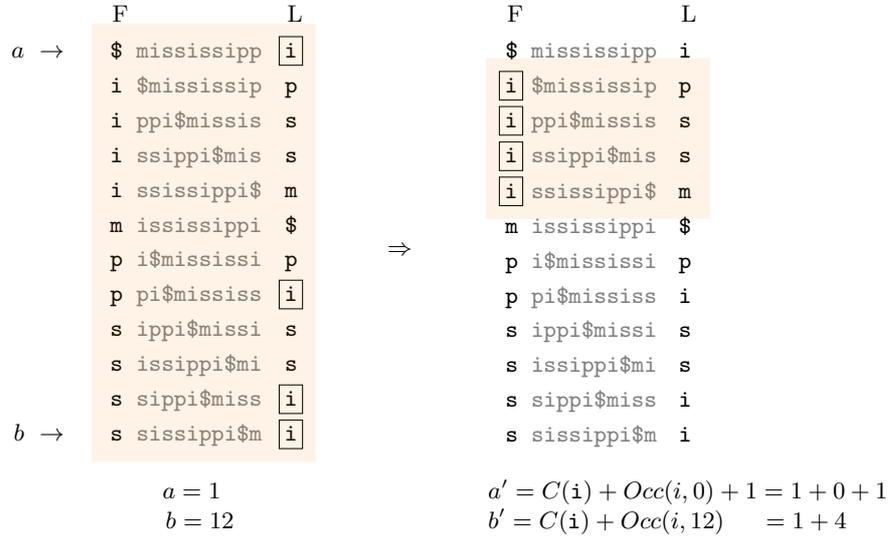
\begin{figure}[h]
   \begin{center}
    \hfill
    \small
	\tikzstyle{every picture}+=[remember picture,baseline]
	\tikzstyle{every node}+=[inner sep=2pt,outer sep=5pt]
	\bgroup
	\setlength{\tabcolsep}{0.05em}
\setcounter{nodecount}{0}
\setcounter{nodecountB}{0}
	\begin{tabular}[t]{rrlr}
	&F&&L\\
	\tabnodeE{\small$a\ \rightarrow\quad$}&\tabnodeB{\texttt{\$}}&\tabnodeE{\textcolor{gray}{\texttt{mississipp}}}&\tabnodeM{{\texttt{i}}}\\
	&\tabnodeE{\texttt{i}}&\tabnodeE{\textcolor{gray}{\texttt{\$mississip}}}&\tabnodeE{{\texttt{p}}}\\
	&\tabnodeE{\texttt{i}}&\tabnodeE{\textcolor{gray}{\texttt{ppi\$missis}}}&\tabnodeE{{\texttt{s}}}\\
	&\tabnodeE{\texttt{i}}&\tabnodeE{\textcolor{gray}{\texttt{ssippi\$mis}}}&\tabnodeE{{\texttt{s}}}\\
	&\tabnodeE{\texttt{i}}&\tabnodeE{\textcolor{gray}{\texttt{ssissippi\$}}}&\tabnodeE{{\texttt{m}}}\\
	&\tabnodeE{\texttt{m}}&\tabnodeE{\textcolor{gray}{\texttt{ississippi}}}&\tabnodeE{{\texttt{\$}}}\\
	&\tabnodeE{\texttt{p}}&\tabnodeE{\textcolor{gray}{\texttt{i\$mississi}}}&\tabnodeE{{\texttt{p}}}\\
	&\tabnodeE{\texttt{p}}&\tabnodeE{\textcolor{gray}{\texttt{pi\$mississ}}}&\tabnodeM{{\texttt{i}}}\\
	&\tabnodeE{\texttt{s}}&\tabnodeE{\textcolor{gray}{\texttt{ippi\$missi}}}&\tabnodeE{{\texttt{s}}}\\
	&\tabnodeE{\texttt{s}}&\tabnodeE{\textcolor{gray}{\texttt{issippi\$mi}}}&\tabnodeE{{\texttt{s}}}\\
	&\tabnodeE{\texttt{s}}&\tabnodeE{\textcolor{gray}{\texttt{sippi\$miss}}}&\tabnodeM{{\texttt{i}}}\\
	\tabnodeE{\small$b\ \rightarrow\quad$}& \tabnodeB{\texttt{s}}&\tabnodeE{\textcolor{gray}{\texttt{sissippi\$m}}}&\tabnodeM{{\texttt{i}}}\\
	\\
	&\multicolumn{3}{c}{$a=1\ \ $}\\
	&\multicolumn{3}{c}{$b=12$}\\
	\end{tabular}
	\egroup
	\begin{tikzpicture}[overlay]
	\filldraw[fill=orange,fill opacity=0.1,draw=none] (B1.north west)--(1.north east)--(4.south east)-|(B2.south west);
	\end{tikzpicture}
	\hfill
	\begin{array}[t]{c}
	\\\\\\\\\\\\\\\\\
	\ \Rightarrow
	\end{array}\hfill
	\bgroup
	\setlength{\tabcolsep}{0.05em}
	\setcounter{nodecount}{0}
	\setcounter{nodecountB}{0}
	\begin{tabular}[t]{lrlrl}
	\em&F&&L\\
	&\tabnodeE{\texttt{\$}}&\tabnodeE{\textcolor{gray}{\texttt{mississipp}}}&\tabnodeE{{\texttt{i}}}\\
	&\tabnodeM{\texttt{i}}&\tabnodeE{\textcolor{gray}{\texttt{\$mississip}}}&\tabnodeB{{\texttt{p}}}\\
	&\tabnodeM{\texttt{i}}&\tabnodeE{\textcolor{gray}{\texttt{ppi\$missis}}}&\tabnodeE{{\texttt{s}}}\\
	&\tabnodeM{\texttt{i}}&\tabnodeE{\textcolor{gray}{\texttt{ssippi\$mis}}}&\tabnodeE{{\texttt{s}}}\\
	&\tabnodeM{\texttt{i}}&\tabnodeE{\textcolor{gray}{\texttt{ssissippi\$}}}&\tabnodeB{{\texttt{m}}}\\
	&\tabnodeE{\texttt{m}}&\tabnodeE{\textcolor{gray}{\texttt{ississippi}}}&\tabnodeE{{\texttt{\$}}}\\
	&\tabnodeE{\texttt{p}}&\tabnodeE{\textcolor{gray}{\texttt{i\$mississi}}}&\tabnodeE{{\texttt{p}}}\\
	&\tabnodeE{\texttt{p}}&\tabnodeE{\textcolor{gray}{\texttt{pi\$mississ}}}&\tabnodeE{{\texttt{i}}}\\
	&\tabnodeE{\texttt{s}}&\tabnodeE{\textcolor{gray}{\texttt{ippi\$missi}}}&\tabnodeE{{\texttt{s}}}\\
	&\tabnodeE{\texttt{s}}&\tabnodeE{\textcolor{gray}{\texttt{issippi\$mi}}}&\tabnodeE{{\texttt{s}}}\\
	&\tabnodeE{\texttt{s}}&\tabnodeE{\textcolor{gray}{\texttt{sippi\$miss}}}&\tabnodeE{{\texttt{i}}}\\
	&\tabnodeE{\texttt{s}}&\tabnodeE{\textcolor{gray}{\texttt{sissippi\$m}}}&\tabnodeE{{\texttt{i}}}\\
	\\
	\multicolumn{5}{l}{$a'=C(\texttt{i})+Occ(i,0)+1=1+0+1$}\\
	\multicolumn{5}{l}{$b'=C(\texttt{i})+Occ(i,12)\;\;\;\;\;=1+4$}\\
	\end{tabular}
	\egroup
	\begin{tikzpicture}[overlay]
	\filldraw[fill=orange,fill opacity=0.1,draw=none] (1.north west)--(B1.north east)--(B2.south east)-|(2.south west);
	\end{tikzpicture}
	\hfill\ 
    \end{center}
  
    \caption{First step of the backwards search for $P=\texttt{ssi}$ in the FM-index for the text $T=\texttt{mississippi\$}$.
    The first interval $[a, b]$ is the whole range $[1,12]$. From all matrix rows we search those beginning with the last pattern character $P[3]=i$.
    From $Occ(i,1)=0$ and $Occ(i,12)=4$ follows $a'=C(i)+0+1=2$ and $b'=C(i)+4=5$. }\label{fig:bwt}
\end{figure}

In contrast to suffix trees or suffix arrays, where a prefix $P$ of a pattern is extended by characters to the right (referred to as forward search $P \rightarrow Pc$ for $c \in \Sigma$), the FM index can only be searched using backward search, i.e., extending a suffix $P'$ by characters to the left, $P' \rightarrow cP'$. Performing a single character backward search of $c$ in the FM index will require two pieces of information. First, $C(c)$, the number of characters in $L$ that are lexicographically smaller than $c$, second, $Occ(c, i)$, the number of $c$'s in $L[1..i]$. Given a range $[a, b]$ for $P$; i.e., the range in the sorted list of cyclic permutations that starts with $P$, we can compute the range $[a^\prime, b^\prime]$ for $cP$ as follows: $[a^\prime, b^\prime]$ = $[C(c) + Occ(c, a - 1) + 1, C(c) + Occ(c, b)]$. We will refer to the BWT together with tables $C$ and $Occ$ as \emph{FM index $\Index$} (see Figure \ref{fig:bwt} for an example of one search step).

The 2FM index maintains two FM indices $\Index$ and $\Index^{rev}$, one for the original text $T$ and one for the reversed text $T^{rev}$. Searching a pattern left to right on the original text (i.e., conducting a forward search) corresponds to a backward search in $\Index^{rev}$; searching a pattern right to left in the original text corresponds to a backward search in  $\Index$. The difficulty is to keep both indices \emph{synchronized} whenever a search step is performed. W.l.o.g. we assume that we want to extend the pattern to the right, i.e., perform a forward search $P \rightarrow Pc_j$ for some character $c_j$. First, the backward search $P^{rev} \rightarrow c_jP^{rev}$ is carried out on $\Index^{rev}$ and its new range $[a^\prime_{rev}, b^\prime_{rev}] = [C(c_j) + Occ(c_j, a_{rev} - 1) + 1, C(c_j) + Occ(c_j, b_{rev})]$ is computed. The new range in $\Index$ can be calculated using the interval $[a,b]$ for $P$ in $\Index$ and the range size of the reversed texts index $[a^\prime, b^\prime] = [a+smaller,a+smaller+ b^\prime_{rev} - a^\prime_{rev}]$. To compute $smaller$, Lam et al. \cite{Lam_High_2009} propose to perform $\Oh(\sigma)$ backward searches $P^{rev} \rightarrow c_iP^{rev}$ for all $1 \leq i < j$ and sum up the range sizes, i.e., $smaller = \sum_{1 \leq i < j} Occ(c_i, b_{rev}) - \sum_{1 \leq i < j} Occ(c_i, a_{rev} - 1)$ leading to a total running time of $\Oh(\sigma)$ (See Figure \ref{fig:2fmsearch} for an illustration).

\begin{figure}[htbp]
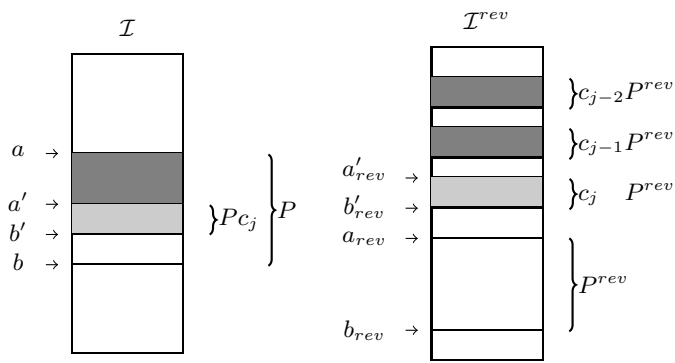

        \centering
        ~~~~~~~~~~~\subfloat{\biFMTwoA}~~~~~~~~~~~~
        \subfloat{\biFMTwoB}
    \caption{When conducting a forward search $P \Rightarrow Pc_j$ we need to determine the subinterval of the suffix array interval for $P$ which is depicted on the left. In order to determine the start, we can compute in $\Index^{rev}$ the size of the intervals for all characters smaller then $c_j$, depicted in dark gray on the right. The sum of all those sizes is exactly the needed offset from the beginning of the interval for $P$ in $\Index$.}\label{fig:2fmsearch}
\end{figure}

The implementation of the occurrence table $Occ$ is usually \emph{not} done by storing explicitly the values of the entire table. Instead of storing the entire $Occ : \Sigma \times \{1,\dots,n\} \rightarrow \{1,\dots,n\}$ one uses the more space-efficient \emph{constant time rank dictionary}: for every $c \in \Sigma$ a bit vector $B_c[1..n]$ is constructed such that $B_c[i] = 1$ if and only if $L[i] = c$. Thus the occurrence value equals the number of $1$'s in $B_c[1..i]$, i.e., $Occ(c, i) = rank(B_c, i)$. Jacobson \cite{jacobson1988succinct} showed that rank queries can be answered in constant time using only $\oh(n)$ additional space per bit vector. Since then many other constant time rank query data structures have been proposed. For an overview we refer the reader to \cite{Navarro_Fast_2012} containing a comparison of various implementations. Since we will make also use of this technique, we explain the most simple idea, namely the one for \emph{2-level rank dictionaries} in the following paragraph.

\subsection*{Constant time rank queries}
\label{subsec:constrankquery}

In order to store partial prefix sums, the technique uses two levels of lookup table, called \emph{blocks} and \emph{superblocks}. Given a bit vector $B$ of length $n$ we divide it into blocks of length $\ell$ and superblocks of length $\ell^2$ where
 \[\ell=\left\lfloor\frac{\log n}{2}\right\rfloor.\]
 For both, blocks and superblocks we allocate arrays $M$ and $M'$ of sizes $\left\lfloor\frac{n}{\ell}\right\rfloor$ and $\left\lfloor\frac{n}{\ell^2}\right\rfloor$ respectively (see Figure \ref{fig:2level} for an illustration).

For the $m$-th superblock we store the number of $1$'s from the beginning of $B$ to the end of the superblock in $M'[m]=rank(B, m \cdot \ell^2)$. As there are $\left\lfloor\frac{n}{\ell^2}\right\rfloor$ superblocks, $M'$ can be stored in $\Oh\left(\frac{n}{\ell^2} \cdot \log n\right)=\Oh\left(\frac{n}{\log n}\right)=\oh(n)$ bits.
For the $m$-th block we store the number of $1$'s from the beginning of the overlapping superblock to the end of the block in $M[m]=rank\big(B[1+k\ell..n],(m-k) \cdot \ell\big)$, where $k=\left\lfloor\frac{m-1}{\ell}\right\rfloor\ell$ is the total number of blocks in all superblocks left of the current superblock. $M$ has $\left\lfloor\frac{n}{\ell}\right\rfloor$ entries and can be stored in $\Oh\left(\frac{n}{\ell} \cdot \log \ell^2\right)=\Oh\left(\frac{ n\cdot\log\log n}{\log n}\right)=\oh(n)$ bits.

\begin{figure}[htbp]
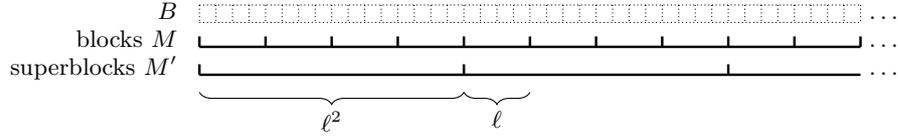

  \centering

  \hfill
  \rankdictGraphics
  \hfill
  \caption{2-level dictionary. Blocks and superblocks are allocated for each character (only one shown).}
    \label{fig:2level}
\end{figure}

Given a rank query $rank(B, i)$, one can now add the corresponding superblock and block values. But we still have to account for the $1$'s in the block covering position $i$ (in case $i$ is not at the end of a block). Let $P$ be a precomputed lookup table such that for each possible bit vector $V$ of length $\ell$ and $i\in[1..\ell]$ holds $P[V][i]=rank(V,i)$. $V$ has $2^\ell \cdot \ell$ entries of values at most $\ell$ and thus can be stored in \[\Oh\left(2^\ell \cdot \ell \cdot \log \ell \right)=\Oh\left(2^{\frac{\log n}{2}} \cdot \log n \cdot \log \log n \right)= \Oh\left(\sqrt{n} \cdot \log n \cdot \log\log n \right)=\oh(n)\]
bits.
We now decompose a rank query into 3 subqueries using the precomputed tables. For a position $i$ we determine the index $p=\left\lfloor\frac{i-1}{\ell}\right\rfloor$ of next block left of $i$ and the index $q=\left\lfloor\frac{p-1}{\ell}\right\rfloor$ of the next superblock left of block $p$. Then it holds:
$$rank(B,i)=M'[q]+M[p]+P\big[B[1+p\ell..(p+1)\ell]\big]\big[i-p\ell\big].$$

Since the text $T$ of length $n$ has to be addressed, we assume that a register has at least size $\left\lceil\log n\right\rceil$. Thus $B[1+p\ell..(p+1)\ell]$ fits into a single register and can be determined in $\Oh(1)$ time. Therefore a rank query can be answered in $\Oh(1)$ time. In practice the precomputed lookup table $P$ is replaced by a popcount operation on the CPU register and we have only two lookup operations.

One can now replace the occurrence table by this 2-level dictionary, i.e., by creating a bit vector for every $c \in \Sigma$ and setting it to $1$ if the character occurs in the BWT $L$. This results in $\Oh(\sigma\cdot n)+\oh(\sigma\cdot n)$ bits space requirement.

\end{document}